\DeclareMathOperator*{\argmin}{arg\,min}
\newtheorem{lemma}{Lemma}
\algnewcommand{\algorithmicand}{\textbf{ and }}
\algnewcommand{\algorithmicor}{\textbf{ or }}
\algnewcommand{\AlgAnd}{\algorithmicand}
\algnewcommand{\AlgOr}{\algorithmicor}
\Crefname{figure}{Fig.}{Figs.}
\newcolumntype{C}{>{\centering\arraybackslash}X} 
\newcolumntype{b}{X}
\newcolumntype{s}{>{\hsize=.5\hsize}X}
\let\oldforeign@language\foreign@language
\DeclareRobustCommand{\foreign@language}[1]{%
	\lowercase{\oldforeign@language{#1}}}
\providecommand{\algorithmname}{Algorithm}
\let\oldforeign@language\foreign@language
\DeclareRobustCommand{\foreign@language}[1]{%
	\lowercase{\oldforeign@language{#1}}}
\newtheorem{thm}{Theorem}
\newtheorem{rem}{Remark}
\newtheorem{assum}{Assumption}
\begin{document}
	
	
	\title{An Observer-Based Reinforcement Learning Solution for Model-Following Problems}

	\author{Mohammed I. Abouheaf, Kyriakos G. Vamvoudakis, Mohammad A. Mayyas, and Hashim A. Hashim 
		\thanks{
	This work was supported in part by National Science Foundation under grant Nos. S\&AS-1849264, CPS-1851588, and CPS-2038589 and by National Sciences and Engineering
	Research Council of Canada (NSERC), under the grants RGPIN-2022-04937. 
}
		\thanks{M. I. Abouheaf and M. A. Mayyas are with the Robotics Engineering, Bowling Green State University, Bowling Green, OH, 43403, USA, email: \{mabouhe,mmayyas\}@bgsu.edu.
			K. G. Vamvoudakis is with the Daniel Guggenheim 	School of Aerospace Engineering, Georgia Institute of Technology, Atlanta, GA, 30332, USA, e-mail: kyriakos@gatech.edu.
			H. A. Hashim is with the Department of Mechanical and Aerospace Engineering, Carleton University, Ottawa, Ontario, K1S-5B6, Canada, e-mail: HashimMohamed@cunet.carleton.ca.
}
}
	
	\maketitle
	\pagestyle{empty}
	\thispagestyle{empty}
	
	\begin{abstract}
In this paper, a multi-objective model-following control problem is solved using an observer-based adaptive learning scheme. The overall goal is to regulate the model-following error dynamics along with optimizing the dynamic variables of a process in a model-free fashion. This solution employs an integral reinforcement learning approach to adapt three strategies. The first strategy observes the states of desired process dynamics, while the second one stabilizes and optimizes the closed-loop system. The third strategy allows the process to follow a desired reference-trajectory. The adaptive learning scheme is implemented using an approximate projection estimation approach under mild conditions about the learning parameters. 
	\end{abstract}
	
	\begin{IEEEkeywords}
		Optimal Control, Model-Following, Integral Reinforcement Learning, Approximate Parameter Estimation
	\end{IEEEkeywords}
	\IEEEpeerreviewmaketitle{}

\section{Introduction}
Model-following techniques have been adopted to find solutions for the trajectory-tracking control problems. This category of problems has been tackled using the optimal tracking control framework~\cite{Lewis2012}. The solution is accomplished by solving a set of coupled differential equations offline and then apply the derived strategy forwards in time. Such solutions enforced usage of full information of the process dynamics. On a relevant side, the adaptive approaches have been employed to solve reference-tracking control problems in real-time~\cite{aastrom2013adaptive}. Nonetheless, this category of solutions is associated with some limitations. The underlying trajectory-tracking approaches i) regulate the tracking error dynamics without optimizing the dynamic variables of the process itself and ii) are either partially or fully dependent on the process dynamics. On another side, the observer strategy requires existence of exact or approximate model of the process. Further, the observation gain does not employ flexible-order of observation error dynamics and relies on a model-based low-order error structure. These limitations motivate an innovative model-free control architecture that adopts an observer idea to solve the model-following problem for Linear Time Invariant (LTI) systems. This solution employs an Integral Reinforcement Learning (IRL) approach under mild conditions about the learning parameters to guarantee convergence.

The model-following applications involve the regulation of hypersonic aircraft, autonomous vehicles, underactuated systems, and robotic manipulators~\cite{Byrne1995,Kam1,Liu2018,Chen2021,MPC2021,Hol2002}.     
The aforementioned model-following limitations are observed in several solutions such as dual mode predictive control~\cite{DMMPC2021}, gain scheduling~\cite{Kam1}, sliding mode surfaces~\cite{AUV1,Robust2019}, adaptive backstepping \cite{hashim2023observer,hashim2023exponentially} and $\mathcal{L}_1$ adaptive control~\cite{Chen2021},  Lyapunov theory-based MRAS~\cite{Byrne1995,Moore2014}, model predictive control~\cite{MPC2014,MPC2021}, barrier function-based MRAS~\cite{Liu2018}, means of linear matrix inequalities~\cite{SHI2017}, and feedforward control~\cite{netw1}.
The graphical games have been utilized to solve leader-follower control problems for LTI agents interacting using graph topologies~\cite{AbouheafCTT2015,AbouheafICRA19,AbouheafAuto14}. These relied on pinning control ideas to ensure synchronization among the agents.
A model-based approach adopted the sum-of-squares polynomial notion to solve nonlinear model-following control problems in~\cite{Moore2014}. Another robust model-following approach adopted a model-predicted control concept for nonlinear systems~\cite{RMPC2021}. 
Further, a sliding surface-based observer scheme is employed to solve a model-following control problem for nonlinear systems~\cite{Robust2019}. It mandated existence of partial information of the process dynamics along with a zero-state detectability restriction to ensure stability. 
A similar conclusion is drawn when an observer-based  model-reference adaptive system is adopted to control sensor-less brushless doubly-fed induction machine~\cite{Obs2sen}. Moreover, a fuzzy-state observer is employed to actuate a Maglev grasping robot arm~\cite{Obsv_2023}. These observer-based schemes do not produce model-free strategies. Herein, a new observer idea is developed using a model-free strategy.
Further, an adaptive learning scheme based on IRL will be considered to solve the model-following control problem.

Reinforcement Learning (RL) is a machine learning tool that makes use of temporal difference structures to search for the optimal strategy-to-follow in a dynamic learning environment~\cite{sut92,Sutton_1998,Bertsekas1996,abouheaf2023online}. This results in a series of penalization or rewards aiming to maximize the cumulative sum of rewards. The RL solutions are realized using two-step techniques such as Value Iteration (VI) and Policy Iteration (PI)~\cite{Bertsekas1996,AbouheafACC19,Busoniu2010}. Hence, parameter estimation approaches are adopted to find the underlying strategies such as Recursive Least Squares (RLS), Batch Least Squares (BLS), and other approximate estimation techniques, to name a few~\cite{Busoniu2010, Srivastava2019}.
The optimal control mathematical setup in the continuous-time domain results in temporal difference form, namely the integral Bellman optimality equation~\cite{IRL_Kyr}. This equation can be solved in continuous-time mode using the means of IRL approaches.
The Bellman or integral Bellman optimality equations cannot be solved analytically. Hence, the adaptive critic structures are needed to approximate the RL solutions~\cite{Sutton2008,CrtBahare,CrtLewis,CrtZhao}. The adaptation mechanisms for such structures rely on gradient approaches to regulate the tuning errors. The RL techniques have been adopted to solve different problems such as the Linear Quadratic Regulator (LQR)~\cite{Bah2017}, output-based regulation of multi-agent systems~\cite{Lewis20}, model-following control~\cite{MFAC2021,Bahare14}, and control of flexible wing aircraft~\cite{AbouheafTrans20}. In this work, an IRL approach will be adopted to solve the model-following control problem.

\paragraph*{Contributions} A customized control structure is introduced to solve the model-following control problem. It comprises three model-free strategies that are able to i) regulate the model-following tracking errors, ii) regulate the observation errors, and iii) optimize the closed-loop performance of the process. This scheme does not employ the dynamics of the process explicitly in any of the developed model-free strategies. Further, the observer strategy relies on a flexible-order of the error dynamics, rather than depending on a low-order scheme. The observer itself constitutes an additional model-following structure that aims to guide the internal dynamics of the process. Unlike many approaches, the proposed one is able to optimize not only the model-following error dynamics but also the closed-loop dynamic performance. This is done by solving the underlying Linear Quadratic Regulation (LQR) problem without involving the exact or approximate dynamics of the process in the designated strategies.

\paragraph*{Mathematical notation} In this paper $\mathbb{R}$ refers to the set of real numbers. The non-negative integers and positive whole numbers are denoted by $\mathbb{Z}_{0}^{+}$ and $\mathbb{N}$, respectively. The Kronecker product is symbolized by {\tiny $\otimes$}. The gradient of function $\mathcal{M} $ is referred to as $\grad\mathcal{M}$. Let $ ||\varkappa ||_{\infty} = \sup\limits_{k \in \mathbb{N}} || \varkappa(k) ||_\infty$ define the $\mathcal{L}_\infty-$ norm of a sequence $\{ {\varkappa(k)} \}_{k=0}^{\infty}$ with $\mathcal{L}_2 \{\varkappa: ||\varkappa ||_{2} < \infty\}$ and $\mathcal{L}_\infty \{\varkappa: ||\varkappa ||_{\infty} < \infty\}$.

\paragraph*{Structure} The paper is organized as follows: 	
Section~\ref{sec:Preliminaries} details the overall control scheme comprised of three model-free strategies used to achieve the optimization goals. The optimal control setup leading to a temporal difference form, namely the integral Bellman optimality equation, is discussed in Section~\ref{sec:OPT_IBE}. Moreover, this Section outlines the stability characteristics of the solution. Section~\ref{sec:IRL_Sol} presents the model-free IRL solution and its actor-critic implementation. Furthermore, it introduces an approximate projection technique to tune the actor-critic weights to ensure stable adaptations. Section~\ref{sec:simulation} validates the IRL solution using unstable dynamic process and a nonlinear reference-trajectory. Finally, Section~\ref{sec:conclus} highlights the main findings.

\section{Problem Formulation\label{sec:Preliminaries}}
The model-following problem is faced by the complexity of the mathematical manipulations of the reference-tracking error dynamics. Further, it overlooks the simultaneous optimization of the remaining dynamic variables of the process which could cause an additional computational burden. In the sequel, the overall control scheme is explained using an observer-based strategy. The process dynamics structure is given by
\begin{equation}
	\dot {\bf \mathcal{X}}={\bf A \, \mathcal{X}}+{\bf B \, u} \quad \text{and} \quad Y=C\, \mathcal{X},
	\label{eq:dyn}
\end{equation}
where ${\bf \mathcal{X}} \in \mathbb{R}^{n},$ ${\bf \, u} \in \mathbb{R}^{m},$ and $Y\in\mathbb{R}^{p}$ are vectors of the states, input control signals, and output signals, respectively. Further, ${\bf A},$ ${\bf B},$ and ${\bf C}$ are the dynamic parameters of the process. 

The process~\eqref{eq:dyn} is required to follow another dynamical model described by
\begin{equation*}
 \dot{\hat{\bf\mathcal{X}}} \, = \, {\bf\hat  A} \, {\bf \,   \mathcal{\hat X}} \, + \,{\bf \hat B \, \left({\bf u}^{\pi_{Ob}}+{\bf u}\right)},
\label{eq:dyna}
\end{equation*}
where ${\bf \,   \mathcal{\hat X}} \in \mathbb{R}^{n}$ is a vector of the desired or observed states, ${\bf\hat A}$ and ${\bf\hat  B}$ are parameters of the desired dynamical system or approximated process, and ${\bf u}^{\pi_{Ob}}\in \mathbb{R}^{m}$ is the control signal due to an observer strategy ${\pi_{Ob}}$.
\begin{rem}
Parameters ${\bf\hat A}$ and ${\bf\hat  B}$ could serve as the desired dynamical performance required from the process or could even represent some approximation of that process (i.e., ${\bf A}$ and ${\bf B}$). The developed adaptive learning mechanism will not employ the dynamic parameters of the process nor its counterpart to provide a solution. The goal is to regulate the observation errors without either having partial or full knowledge of the process dynamics. 
\end{rem} 

The reference-trajectory dynamical behavior is prescribed by an independent command generator given by

${\bf Y}^{ref}(t)=f(t), \,$ where ${\bf Y}^{ref}(t) \in \mathbb{R}^{q}$.

The control solution will adopt the form in~\eqref{eq:dyn} to generate the online measurements, while the dynamic system parameters will not be explicitly employed in any part of the strategies. 
The overall objective of the optimization problem is to let an output ${\bf Y}^s \in \mathbb{R}^{q}$ of system~\eqref{eq:dyn} follow the reference-trajectory ${\bf Y}^{ref}(t) \in \mathbb{R}^{q}$ (i.e., $\lim\limits_{t \rightarrow \infty}\lVert{{\bf e}^{Mf}(t)}\rVert\rightarrow \bf 0,$ ${\bf e}^{Mf}(t)={\bf Y}^{ref}(t)-{\bf Y}^s(t)$). However, many of the available solutions can work either offline or use model-based approaches. This requires analytical manipulation of the error dynamics. Further, the regulation of the remaining dynamic variables is overlooked while following the desired trajectory. For the clarity of mathematical notation, any time-dependent function $g(t)$ will be referred to as $g_t$.

In order to tackle the raised challenges, the overall control strategy is divided into three sub-strategies: (i) one strategy ${\bf u}_t^{\pi_{Ob}}\in \mathbb{R}^{m}$ observes the states of the process. It represents an additional model-following loop that compares the outputs of the process~\eqref{eq:dyn} to those of the desired or approximated ones, (ii) another strategy ${\bf \mu}_t^{\pi_{Cl}}\in \mathbb{R}^{m}$ optimizes the closed-loop performance of the dynamic system, and (iii) the third strategy ${\bf u}_t^{\pi_{Mf}}\in \mathbb{R}^{m}$ reflects the model-following actions. These interactive strategies are implemented in a model-free fashion. Hence, the main control strategy can be written such that ${\bf u}_t={\bf \mu}_t^{\pi_{Cl}}+{\bf u}_t^{\pi_{Mf}}$. In the sequel, the detailed control scheme will be explained. 
\subsection{Observing the Desired Dynamic Performance}  
This strategy aims to find the desired states using an observer-like structure. Hence, the desired or approximated dynamic process is described by 
\begin{equation}
\dot {\bf \hat{\mathcal{X}}}\,=\,{\bf \hat A \, \mathcal{\hat X}}\,+\,{\bf \hat B} \left({\bf u}^{\pi_{Ob}}+{\bf u}\right) \, \, \text{and}\, \, \hat Y \,=\, C\,  \mathcal{\hat X},
\label{eq:dyna}
\end{equation}
where ${\bf \mathcal{\hat X}} \in \mathbb{R}^{n}$ and $\hat Y\in\mathbb{R}^{p}$ are vectors of the desired or observed states and output signals, respectively.

The observer control signal ${\bf u}^{\pi_{Ob}}$ relies on a flexible-order of tracking-error dynamics which is dedicated by a vector ${\bf E}^{Ob}$. The size of this vector varies according to the number of error samples $e^{Ob}_t={\bf Y}_t-{\bf \hat Y}_t$ collected at a fixed-time interval $\delta$ such that ${\bf E}^{Ob}_t= \left[ \begin{array}{ccc}
e^{Ob}_t & e^{Ob}_{t+\delta} & e^{Ob}_{t+2\delta} \end{array}\right]^\textrm{T}  \in \mathbb{R}^{3p}$. The observer strategy ${\pi_{Ob}}$ is selected using an adaptive learning mechanism and the resulting control signal is given by ${\bf u}^{\pi_{Ob}}_{t+\delta}={\bf u}^{\pi_{Ob}}_t+\mu_t^{\pi_{Ob}},$ where $\mu_t^{\pi_{Ob}}={\bf \pi}_{Ob}\,{\bf E}^{Ob}_t, \, \mu_t^{\pi_{Ob}} \in \mathbb{R}^{m}$. The strategy ${\bf \pi}_{Ob}$ is selected to minimize the following performance index
\begin{equation}
	J_t^{\bf \pi_{Ob}}=\int_t^\infty {U}_\tau^{Ob} \, \left({\bf E}^{Ob}_\tau,\mu_\tau^{\pi_{Ob}}\right)\,d\tau,
	\label{eq:index_Ob}
\end{equation} 
where ${U}^{Ob}$ is an objective cost function to minimize the cumulative observation errors.
\begin{assum}
	\label{Asm:1}
	The dynamic system~\eqref{eq:dyna} defined by $\bf (\hat A,
	\, C)$ is observable and the process~\eqref{eq:dyn} is observable as well. \hfill $\square$
\end{assum}
This observer strategy ${\pi_{Ob}}$ will be determined in a model-free fashion using the adaptive learning approach. 
\subsection{Closed-Loop Strategy}  
The model-following strategy regulates the trajectory-tracking error dynamics while stabilizing and optimizing the performance of the process. Herein, a closed-loop feedback strategy ${\bf\pi}_{Cl}$ will be advised based on the observed states in real-time. This is to ensure stability of the closed-loop dynamical system, provided that this system is stabilizable. Hence, the objective function associated with this strategy is given by
\begin{equation}
	J_t^{\bf \pi_{Cl}}=\int_t^\infty {U}_\tau^{Cl} \, \left({\bf \mathcal{\hat X}}_\tau,\mu_\tau^{\pi_{Cl}}\right)\,d\tau,
	\label{eq:index_Ob}
\end{equation} 
where ${U}_t^{Cl}$ is a cost function.
The resulting strategy will have a linear feedback form given by $\mu_t^{\pi_{Cl}}=\pi_{Cl}\, {\bf \mathcal{\hat X}}^{Cl}_t, \, \mu_t^{\pi_{Cl}}\in \mathbb{R}^{m}$.
\begin{assum}
	\label{Asm:2}
	There is a strategy ${\bf\pi}_{Cl}$ that is able to stabilize the closed-loop dynamics of the desired or approximated process $ \dot {\bf{\mathcal{\hat X}}}=({\bf \hat A+\hat B}\,{\bf\pi}_{Cl}){\bf {\bf{\mathcal{\hat X}}}}$.\hfill  $\square$
\end{assum}
Hence, the strategy ${\bf\pi}_{Cl}$ solves the underlying LQR problem of the desired or observed system~\eqref{eq:dyna}.
\subsection{Online Model-Following Strategy}  
The model-following strategy regulates the model-following errors $e^{Mf}_t$ between selected outputs of the process ${\bf Y}^s_t$ and those of the reference system ${\bf Y}^{ref}_t$ (i.e., $e^{Mf}_t={\bf \, Y}^{ref}-{\bf Y}^s_t$). Similar to the observer strategy, the model-following error samples are collected at a fixed-time interval $\delta$ such that ${\bf E}^{Mf}_t= \left[ \begin{array}{ccc}
e^{Mf}_t & e^{Mf}_{t+\delta} & e^{Mf}_{t+2\delta} \end{array}\right]^\textrm{T}  \in \mathbb{R}^{3q}$. The number of error samples is decided based on the expected order of the error dynamics. Three error samples will be considered for both the observer and model-following strategies. The model-following strategy ${\pi_{Mf}}$ will be decided online following the control law ${\bf u}^{\pi_{Mf}}_{t+\delta}={\bf u}^{\pi_{Mf}}_t+\mu_t^{\pi_{Mf}},$ where $\mu_t^{\pi_{Mf}}={{\bf \pi}_{Mf}}\,{\bf E}^{Mf}_t  \in \mathbb{R}^{m}$. A performance index is considered to evaluate the quality of ${\bf \pi}_{Cl}$ such that
\begin{equation}
J_t^{\bf \pi_{Mf}}=\int_t^\infty {U}_\tau^{Mf} \, \left({\bf E}^{Mf}_\tau,\mu_\tau^{\pi_{Mf}}\right)\,d\tau,
\label{eq:index_Mf}
\end{equation} 
where the model-following cost function is denoted by ${U}_t^{Mf}$.
\begin{assum}
\label{Asm:3}
The strategies ${\pi_{Cl}}$ and ${\pi_{Mf}}$ are able to stabilize the process around the desired reference-trajectory ${\bf Y}^{ref}$.  \hfill  $\square$
\end{assum}
\subsection{Overall Control Solution }  
The control mechanism implies existence of kernel solution structures that realize the interactive optimization goals of the sub-control problems (i.e., $\text{argmin}_{\pi_{Ob}}\ \, J_t^{\bf \pi_{Ob}},$ $\text{argmin}_{\pi_{Cl}} \, J_t^{\bf \pi_{Cl}},$ and $\text{argmin}_{\pi_{Mf}} \, J_t^{\bf \pi_{Mf}}$). The process is LTI system and hence the kernel solutions could take quadratic forms in the observer errors, observed states, and model-following errors. Assumptions \ref{Asm:1}, \ref{Asm:2}, and \ref{Asm:3} are made to ensure availability of such strategies that are able to stabilize the open-loop dynamics and follow the desired reference-trajectory. Moreover, this solution form can be attempted for nonlinear systems given the data-driven structure of its proposed strategies. The overall control scheme of the model-following solution is depicted by Fig.~\ref{fig:block}.    

\begin{figure}
	\centering
	\includegraphics[width=0.8\linewidth]{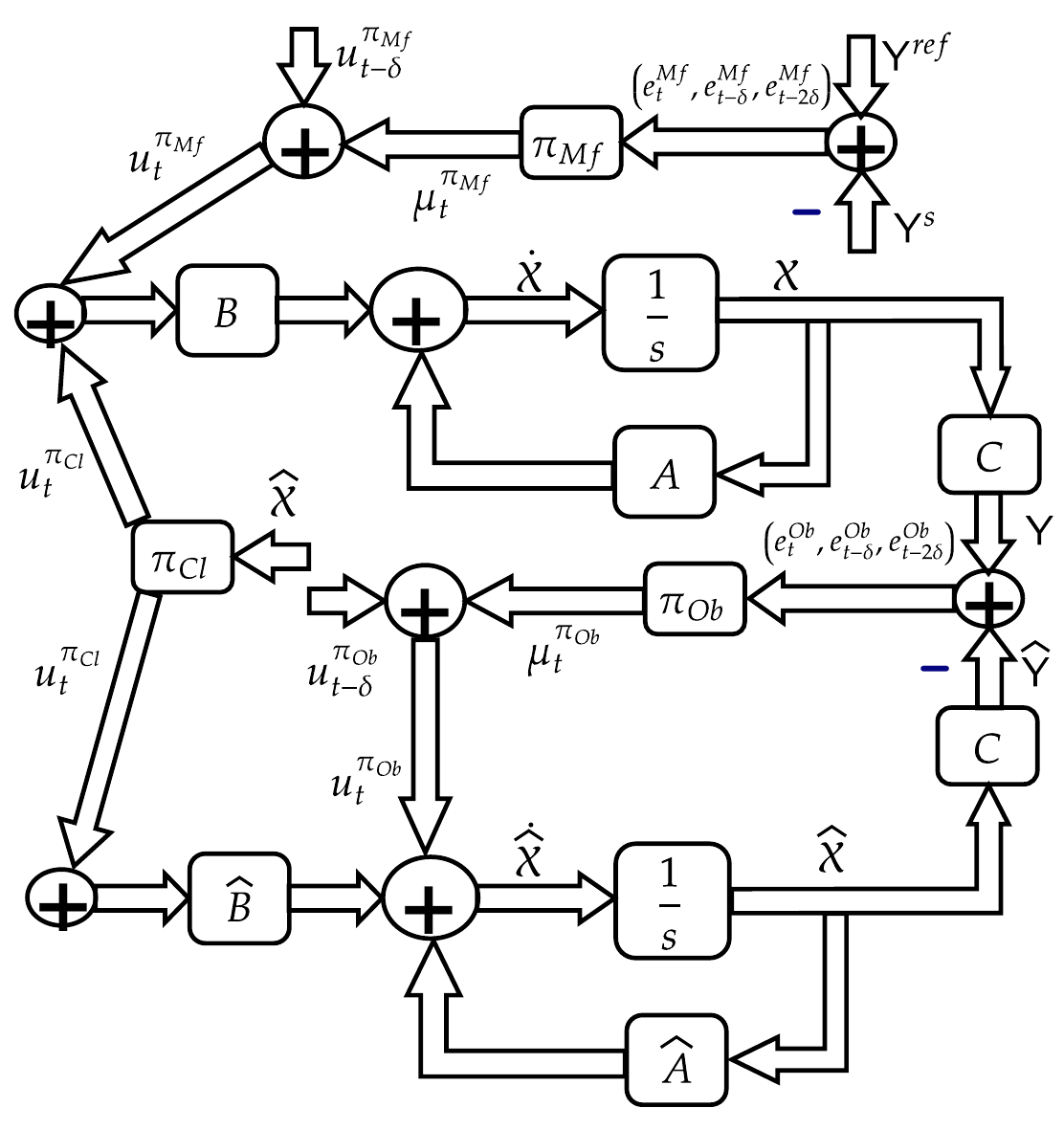}
	\caption{The overall control scheme}
	\label{fig:block}
\end{figure}

\section{Optimal Control Foundation\label{sec:OPT_IBE}}
This section provides an optimal control foundation that will be adopted by the adaptive learning schemes developed. The aim of each sub-control problem is to minimize the respective cost function given by
${U}_t^{i} \, \left(\bm{\mathcal{F}}^{i}_t,\mu_t^{\pi_{i}}\right)=\frac{1}{2}\left(\bm{\mathcal{F}}^{i\, T}_t\,{\bf Q}^i\,\bm{\mathcal{F}}^{i}_t+\mu_t^{\pi_{i \, T}}\,{\bf R}^i\,\mu_t^{\pi_{i}}\right),$ where $i \in \left\{Ob, Cl, Mf\right\} $ stands for a sub-control problem, ${\bf Q}^i \in \mathbb{R}^{n \times n},$ and ${\bf R}^i\in \mathbb{R}^{m \times m}$ are weighting matrices. Each optimization problem is solved using a Hamiltonian structure that is given by 
\begin{equation}
		\label{eq:Ham}
	H^i(\bm{\mathcal{F}}^i_t,\bm{\lambda}_t^{\pi_{i}}, \mu_t^{\pi_{i}})=\bm{\lambda}_t^{\pi_{i} \, T} \, \dot{\bm{Z}}^{\pi_i}_t+{U}_t^{i} \, \left(\bm{\mathcal{F}}^{i}_t,\mu_t^{\pi_{i}}\right),
\end{equation}  
where $	H^i, \, i \in \left\{ Ob,  Cl, Mf \right\}$ is a Hamiltonian function for each sub-control problem $i$, $\bm{\mathcal{F}}^{i}_t \,  \in  \, \left\{{\bf E}^{Ob}_t,\, {\bf \mathcal{\hat X}}^{Cl}_t, \, {\bf E}^{Mf}_t \, \right\}$,  and $\bm{\lambda}_t^{\pi_{i}} \in \mathbb{R}^{(n+m)} $ is a Lagrange multiplier associated with constraints $\dot{\bm{Z}}^{\pi_i}_t: \quad \bm{Z}^{\pi_i}={ \bm{\mathcal{F}}^{{i}\,T}_t , {{\bf \mu}^{\pi_i \, T}_t}}^T \, \in \mathbb{R}^{(n+m)}$. 

The next result shows how the kernel solution forms can be selected. This will be needed to ensure a rigorous temporal difference structure. 
\begin{lemma}
	\label{lem:Lyp}
	Let ${V}_t^i({\bm{Z}}^{\pi_i}_t) > 0, \, {V}_t^i(0)=0$ be a solving value function satisfying the Hamiltonian~\eqref{eq:Ham}. Then, ${V}_t^i({\bm{Z}}^{\pi_i}_t)$ represents a Lyapunov   Function.
\end{lemma}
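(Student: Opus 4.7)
The plan is to verify the three defining conditions of a Lyapunov function for $V_t^i({\bm{Z}}^{\pi_i}_t)$: positivity, vanishing at the origin, and non-positivity of the time derivative along trajectories of the closed-loop dynamics. The first two are already baked into the statement of the lemma since we assume ${V}_t^i({\bm{Z}}^{\pi_i}_t) > 0$ and ${V}_t^i(0) = 0$. The only genuine work is therefore to show $\dot{V}_t^i \le 0$ along the motion of $\bm{Z}^{\pi_i}_t$.

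First, I would interpret what it means for ${V}_t^i$ to be a \emph{solving} value function for the Hamiltonian~\eqref{eq:Ham}. The standard reading in the optimal-control setting used throughout the paper is that the Lagrange multiplier coincides with the gradient of the value function, $\bm{\lambda}_t^{\pi_i} = \grad V_t^i({\bm{Z}}^{\pi_i}_t)$, and that the stationarity condition $H^i({\bm{\mathcal{F}}}^{i}_t, \grad V_t^i, \mu_t^{\pi_i}) = 0$ holds along the closed-loop trajectory driven by the strategy $\pi_i$. I would state this identification up front, so that the Hamiltonian relation can be rewritten as
\begin{equation*}
\bigl(\grad V_t^i({\bm{Z}}^{\pi_i}_t)\bigr)^T \dot{\bm{Z}}^{\pi_i}_t \;=\; -\,{U}_t^{i}\bigl(\bm{\mathcal{F}}^{i}_t,\mu_t^{\pi_{i}}\bigr).
\end{equation*}

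Next, I would compute $\dot{V}_t^i$ by the chain rule along the trajectory, obtaining $\dot{V}_t^i = (\grad V_t^i)^T \dot{\bm{Z}}^{\pi_i}_t$, and substitute the Hamiltonian identity above to get $\dot{V}_t^i = -{U}_t^{i}(\bm{\mathcal{F}}^{i}_t,\mu_t^{\pi_{i}})$. Since ${U}_t^{i}$ is the quadratic form $\tfrac{1}{2}\bigl(\bm{\mathcal{F}}^{i\,T}_t {\bf Q}^i \bm{\mathcal{F}}^{i}_t + \mu_t^{\pi_{i}\,T}{\bf R}^i \mu_t^{\pi_{i}}\bigr)$ with positive (semi-)definite weighting matrices ${\bf Q}^i, {\bf R}^i$, we have ${U}_t^{i} \ge 0$, hence $\dot{V}_t^i \le 0$. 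Combined with the hypothesized ${V}_t^i > 0$ and ${V}_t^i(0) = 0$, this establishes that ${V}_t^i$ satisfies the textbook Lyapunov conditions, which completes the argument.

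The main obstacle I anticipate is not technical but interpretational: the statement calls ${V}_t^i$ a ``solving value function satisfying the Hamiltonian,'' but equation~\eqref{eq:Ham} is a \emph{definition} of $H^i$, not an equation in $V_t^i$. To make the proof watertight I would need to spell out explicitly that ``solving'' means $H^i(\bm{\mathcal{F}}^{i}_t, \grad V_t^i, \mu_t^{\pi_i}) = 0$ evaluated with $\bm{\lambda}_t^{\pi_i} = \grad V_t^i$. Once that is made precise, the remainder of the proof is a one-line substitution into the chain rule and invoking positivity of the quadratic utility; no further manipulation is required.
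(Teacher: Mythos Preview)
Your proposal is correct and mirrors the paper's own proof: identify $\bm{\lambda}_t^{\pi_i}=\grad V_t^i$ via the Hamilton--Jacobi relation, use the stationarity condition $H^i(\bm{\mathcal{F}}^i_t,\grad V_t^i,\mu_t^{\pi_i})=0$ to obtain $\dot V_t^i=-U_t^i\le 0$, and conclude. The only cosmetic difference is that the paper also records the quadratic structure $V_t^i(\bm{Z}^{\pi_i}_t)=\tfrac{1}{2}\bm{Z}^{\pi_i\,T}_t\bm{\mathcal{S}}^i\bm{Z}^{\pi_i}_t$ as the concrete candidate form, which you omit since positivity and $V_t^i(0)=0$ are already hypothesized.
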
 

\begin{proof}
The function ${V}_t^i$ makes use of the LTI dynamic properties of the underlying sub-control problems. Hence, its structure can be selected as
\begin{equation}
	J_t^{\bf \pi_{i}} {V}_t^i(\bm{Z}^{\pi_i}_t)
	=\frac{1}{2}
	{\bm{Z}}^{\pi_i}_t\,
	\bm{\mathcal{S}}^i\,
	{\bm{Z}}^{\pi_i \, T}_t,
	\label{val}
\end{equation}
where $\displaystyle  \bm{0}<
\bm{\mathcal{S}}^i
\equiv
\begin{bmatrix*}[l]
	\bm{\mathcal{S}}^i_{\bm{\mathcal{F}}\bm{\mathcal{F}}}
	& 
	\bm{\mathcal{S}}^i_{\bm{\mathcal{F}}{\bf \mu}^{\pi}}
	\\
	\bm{\mathcal{S}}^i_{{\bf \mu}^{\pi}\bm{\mathcal{F}}}
	&
	\bm{\mathcal{S}}^i_{{\bf \mu}^{\pi}{\bf \mu}^{\pi}}
\end{bmatrix*}
\in \mathbb{R}^{4\times4} ,
$
$\boldsymbol{S}^i_{\bm{\mathcal{F}}\bm{\mathcal{F}}} \in \mathbb{R}^{3 \times 3},$ and $\boldsymbol{S}^i_{{\bf \mu}^{\pi}{\bf \mu}^{\pi}} \in \mathbb{R}$.

This form represents a candidate Lyapunov function under the given assumptions. Further, the Hamilton-Jacobi (HJ) theory provides the relation between value function ${V}_t^i$ and Lagrange multiplier $\bm{\lambda}_t^{\pi_i}$ such that $\bm{\lambda}_t^{\pi_i}=\grad{{V}_t^i}={\partial {V}_t^i}/{\partial {\bm{Z}}^{\pi_i}_t}$. Moreover, the solution of the underlying optimal sub-control problem yields a solution for each Bellman equation (i.e., $	H^i(\bm{\mathcal{F}}^i_t,\grad{{V}_t^i}, \mu_t^{\pi_{i}})=0$) which implies that $ \frac{\partial {V}_t^i}{\partial {\bm{Z}}^{\pi_i}_t}^T \, \dot{\bm{Z}}^{\pi_i}_t+{U}_t^{i} \, \left(\bm{\mathcal{F}}^{i}_t,\mu_t^{\pi_{i}}\right)=0$.
This is an infinitesimal form of 
\begin{equation}
	\label{eq:HJBn}
{\dot V}_t^i+{U}_t^{i} \, \left(\bm{\mathcal{F}}^{i}_t,\mu_t^{\pi_{i}}\right)=0.
\end{equation}	
Since ${\dot V}_t^i\le0,$ then ${V}_t^i$ is a Lyapunov function.
\end{proof}

The realization of a model-free control strategy starts with building a Bellman-based optimization structure (i.e., a temporal difference equation). This structure can be adopted by different approximate heuristic dynamic programming forms. The following result explains how to do that.   

\begin{lemma}
	\label{lem:Bell}
	Let ${V}_t^{*i}({\bm{Z}}^{\pi_{*i}}_t) > 0, \, {V}_t^{*i}(0)=0$ be the optimal solution of the Hamiltonian~\eqref{eq:Ham} following the optimal strategy $\pi_{*i}$. Then, ${V}_t^{*i}({\bm{Z}}^{\pi_{*i}}_t)$ is the optimal solution of the Bellman optimality expression given by	
		\begin{equation}
		{V}_t^{*i}({\bm{Z}}^{\pi_{*i}}_t) =\int_t^{t+\delta} {U}_\tau^{*i} \, \left(\bm{\mathcal{F}}^{i}_\tau,\mu_\tau^{\pi_{*i}}\right)\, d\tau+{V}_t^{*i}({\bm{Z}}^{\pi_{*i}}_{t+\delta}).
		\label{eq:Bello}
	\end{equation}
\end{lemma}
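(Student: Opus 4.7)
The plan is to obtain the integral Bellman optimality expression by integrating the infinitesimal Hamilton-Jacobi identity established in Lemma~\ref{lem:Lyp}. First I would instantiate that lemma at the optimal strategy $\pi_{*i}$, which, together with the HJ relation $\bm{\lambda}_t^{\pi_{i}}=\grad V_t^{i}$, yields the pointwise identity $\dot{V}_\tau^{*i}+U_\tau^{*i}(\bm{\mathcal{F}}^{i}_\tau,\mu_\tau^{\pi_{*i}})=0$ along the trajectory of $\bm{Z}^{\pi_{*i}}_\tau$ generated by $\pi_{*i}$.

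The second step is simply to integrate this identity over the window $[t,t+\delta]$. The running-cost term produces $\int_t^{t+\delta} U_\tau^{*i}\,d\tau$ verbatim, while the total-derivative term telescopes by the fundamental theorem of calculus, giving
\begin{equation*}
\int_t^{t+\delta}\dot{V}_\tau^{*i}\,d\tau
= V_t^{*i}(\bm{Z}^{\pi_{*i}}_{t+\delta}) - V_t^{*i}(\bm{Z}^{\pi_{*i}}_{t}).
\end{equation*}
Rearranging this balance immediately reproduces equation~(\ref{eq:Bello}).

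To upgrade this from ``satisfies the equation'' to ``is the optimal solution'' of~(\ref{eq:Bello}), I would invoke the quadratic structure from Lemma~\ref{lem:Lyp}: the kernel $\bm{\mathcal{S}}^{i}\succ 0$ exists and is unique under Assumptions~\ref{Asm:1}--\ref{Asm:3}, so the value function $V_t^{*i}$ picked out by the HJB stationarity condition coincides with the unique quadratic solution of the integral Bellman equation with boundary condition $V^{*i}(0)=0$. Conversely, dividing~(\ref{eq:Bello}) by $\delta$ and sending $\delta\to 0^{+}$ recovers the infinitesimal HJB form, so the two optimality conditions are equivalent on this class of value functions.

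The main obstacle is this equivalence direction rather than the integration step. Care is required to argue that no spurious, non-optimal $V$ can satisfy the discrete-time Bellman expression while failing the underlying Hamiltonian stationarity: this is handled by the differentiability of $V^{*i}$ along admissible trajectories and by the uniqueness of the positive-definite kernel $\bm{\mathcal{S}}^{i}$ guaranteed by the LTI assumptions, so letting $\delta\to 0^+$ and matching quadratic coefficients closes the argument.
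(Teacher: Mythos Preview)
Your argument is essentially the paper's: both obtain the integral Bellman form by integrating the infinitesimal identity $\dot V^{i}+U^{i}=0$ over $[t,t+\delta]$ (the paper calls this an ``Euler approximation'' rather than invoking the fundamental theorem of calculus, but it is the same computation). The one substantive difference is that the paper, before passing to $V^{*i}$, explicitly performs the minimization $\mu_t^{\pi_{*i}}=\argmin_{\mu^\pi} V_t^{i}$ on the quadratic value function to derive the closed-form optimal law $\mu_t^{\pi_{*i}}=-\bm{\mathcal{S}}^{i^{-1}}_{\mu^\pi\mu^\pi}\,\bm{\mathcal{S}}^i_{\mu^\pi\bm{\mathcal{F}}}\,\bm{\mathcal{F}}^i_t$; this formula is what the later IRL algorithm actually uses, so the paper embeds it in the proof. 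You instead treat the optimality upgrade via uniqueness of the positive-definite kernel and a $\delta\to 0^+$ converse, which is more careful than the paper's one-line claim of equivalence but skips the explicit policy expression; if you want your proof to serve the same downstream role, add that minimization step.
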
 

\begin{proof}
The Hamiltonian~\eqref{eq:HJBn} can be written using Euler approximation so that
	\begin{equation}
	{V}_t^{i}({\bm{Z}}^{\pi_{i}}_t) =\int_t^{t+\delta} {U}_\tau^{i} \, \left(\bm{\mathcal{F}}^{i}_\tau,\mu_\tau^{\pi_{i}}\right)\, d\tau+{V}_t^{i}({\bm{Z}}^{\pi_{i}}_{t+\delta}).
	\label{eq:Bell}
\end{equation}	
The optimal strategy is calculated such that $\mu^{\pi_{*i}}_t=\argmin_{\mu^\pi}{V}_t^{i}({\bm{Z}}^{\pi_{i}}_t)$. Hence, it takes a linear form given by 
\begin{equation}
\mu_t^{\pi_{*i}}=- \bm{\mathcal{S}}^{i^{-1}}_{{\bf \mu}^{\pi}{\bf \mu}^{\pi}}\,	\bm{\mathcal{S}}^i_{{\bf \mu}^{\pi}\bm{\mathcal{F}}}\, \bm{\mathcal{F}}^{i}_t.
	\label{optpol}
\end{equation}	
This strategy results in an optimal function ${V}_t^{*i}({\bm{Z}}^{\pi_{*i}}_t)$. Therefore, the solution of $	H^i(\bm{\mathcal{F}}^i_t,\grad{{V}_t^{*i}}, \mu_t^{\pi_{*i}})=0$ (i.e., the Hamilton-Jacobi-Bellman (HJB) equation) is equivalent to solving the integral temporal difference equation~\eqref{eq:Bello}.   	
\end{proof}
The next result shows that, the observer-based model-following strategy is able to asymptotically stabilize the observer errors (i.e., $\lim\limits_{t \rightarrow \infty}\lVert{{\bf e}^{Ob}_t}\rVert\rightarrow \bf 0$) and the model-following errors (i.e., $\lim\limits_{t \rightarrow \infty}\lVert{{\bf e}^{Mf}_t}\rVert\rightarrow \bf 0$) as well. For clarity, the different tracking errors are refereed to as ${{\bf e}^{i}_t, \, i \in\left\{Ob,Mf\right\}},$. 
\begin{thm}
\label{thm:Equi}
Let the initial values of functions ${V}_0^{i}({\bm{Z}}^{\pi_{i}}_0), \forall i$ be bounded by upper values $\Upsilon^i, \forall i \in \left\{Ob,Mf\right\}$. Then, the trajectory-tracking dynamical error systems are asymptotically stable (i.e., $\lim\limits_{t \rightarrow \infty}\lVert{{\bf e}^{i}_t}\rVert\rightarrow \bf 0$).
\end{thm}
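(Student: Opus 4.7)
The plan is to exploit Lemma~\ref{lem:Lyp} directly on the two error sub-problems $i \in \{Ob, Mf\}$, since Lemma~\ref{lem:Lyp} already provides a Lyapunov candidate ${V}_t^{i}({\bm{Z}}^{\pi_{i}}_t)$ whose time derivative satisfies ${\dot V}_t^i = -{U}_t^{i}(\bm{\mathcal{F}}^{i}_t,\mu_t^{\pi_{i}}) \le 0$. Because the utility ${U}_t^{i}$ is a positive-definite quadratic form in $(\bm{\mathcal{F}}^{i}_t,\mu_t^{\pi_{i}})$ with ${\bf Q}^i, {\bf R}^i \succ 0$, and because ${V}_t^{i}\ge 0$ with ${V}_t^{i}(0)=0$, this gives us Lyapunov stability for free; the remaining work is to promote Lyapunov stability to \emph{asymptotic} stability of ${\bf e}^{i}_t$ itself.

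First I would integrate the infinitesimal HJB identity~\eqref{eq:HJBn} along any trajectory to obtain
\begin{equation*}
{V}_t^{i}({\bm{Z}}^{\pi_{i}}_t)+\int_{0}^{t}{U}_\tau^{i}\left(\bm{\mathcal{F}}^{i}_\tau,\mu_\tau^{\pi_{i}}\right)\,d\tau \;=\; {V}_0^{i}({\bm{Z}}^{\pi_{i}}_0) \;\le\; \Upsilon^i.
\end{equation*}
Since both terms on the left are nonnegative, this simultaneously establishes that (a) ${V}_t^{i}$ is uniformly bounded in $t$, and (b) the integral $\int_0^\infty {U}_\tau^{i}\,d\tau$ converges. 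From (a), together with the quadratic form of ${V}_t^{i}$ in~\eqref{val} and $\bm{\mathcal{S}}^i \succ 0$, I deduce that the augmented state ${\bm{Z}}^{\pi_{i}}_t = [\bm{\mathcal{F}}^{i\,T}_t,\,\mu_t^{\pi_{i}\,T}]^T$ remains in a bounded set for all $t\ge 0$.

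Next I would invoke Barbalat's lemma on the nonnegative, absolutely continuous map $t \mapsto {U}_t^{i}$. Boundedness of ${\bm{Z}}^{\pi_{i}}_t$ (and hence of $\bm{\mathcal{F}}^{i}_t$ and $\mu_t^{\pi_{i}}$) combined with the linear closed-loop dynamics ensures $\dot{\bm{Z}}^{\pi_{i}}_t$ is bounded, which in turn makes ${U}_t^{i}$ uniformly continuous in $t$. Barbalat then yields ${U}_t^{i}\to 0$, and positive-definiteness of ${\bf Q}^i$ forces $\|\bm{\mathcal{F}}^{i}_t\|\to 0$. Because $\bm{\mathcal{F}}^{Ob}_t$ and $\bm{\mathcal{F}}^{Mf}_t$ are stacks of shifted error samples $[e^{i}_t,\,e^{i}_{t+\delta},\,e^{i}_{t+2\delta}]^T$, the first component alone gives $\lim_{t\to\infty}\|{\bf e}^{i}_t\|\to \bf 0$ for $i\in\{Ob,Mf\}$, which is the claim.

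The main obstacle I anticipate is justifying the uniform-continuity hypothesis needed for Barbalat, because the proof structure is ``model-free'' yet the argument must inspect the closed-loop dynamics to bound $\dot{\bm{\mathcal{F}}}^{i}_t$. One clean way around this is to use the discrete-time-like structure implicit in the sampling interval $\delta$: the integral Bellman equation~\eqref{eq:Bell} gives ${V}_{t+\delta}^{i}-{V}_t^{i}=-\int_t^{t+\delta}{U}_\tau^{i}\,d\tau \le 0$, so summing telescopes to $\sum_{k=0}^{\infty}\int_{k\delta}^{(k+1)\delta}{U}_\tau^{i}\,d\tau \le \Upsilon^i$, from which the sampled utilities vanish; combining this with Assumptions~\ref{Asm:1}--\ref{Asm:3} (observability/stabilizability ensuring bounded inter-sample behaviour) yields the asymptotic convergence without ever invoking the unknown plant matrices.
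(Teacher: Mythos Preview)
Your proposal is correct and follows essentially the same route as the paper: use Lemma~\ref{lem:Lyp} to obtain the Lyapunov identity $\dot V_t^i=-U_t^i$, integrate to get $V_t^i\le\Upsilon^i$ and $\int_0^\infty U_\tau^i\,d\tau<\infty$, infer boundedness of ${\bm Z}_t^{\pi_i}$ (the paper does this via Assumptions~\ref{Asm:1} and~\ref{Asm:3}, you do it via $\bm{\mathcal S}^i\succ 0$), then apply Barbalat to conclude $U_t^i\to 0$ and hence $\|\bm{\mathcal F}_t^i\|\to 0$. The only cosmetic difference is that the paper phrases the Barbalat step as $\dot V_t^i\in\mathcal L_2$ implying $\dot V_t^i\to 0$, whereas you (more carefully) check uniform continuity of $U_t^i$ via boundedness of $\dot{\bm Z}_t^{\pi_i}$; your discrete telescoping alternative is unnecessary, since the paper is content with the continuous Barbalat argument.
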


\begin{proof}
The integral Bellman equation~\eqref{eq:Bell} yields a Lyapunov function, as per Lemma~\ref{lem:Lyp}. Hence, ${V}_t^{i}({\bm{Z}}^{\pi_{i}}_t)\le {V}_0^{i}({\bm{Z}}^{\pi_{i}}_0)\le \Upsilon^i$ and ${V}_t^{i}({\bm{Z}}^{\pi_{i}}_t) \in \mathcal{L}_\infty, \forall i \in \left\{Ob,Mf\right\}$. This, Assumption \ref{Asm:1}, and Assumption \ref{Asm:3} reveal that the trajectory-tracking errors $\left\{e^{i}_t, e^{i}_{t+\delta}, e^{i}_{t+2\delta}\right\}\in \mathcal{L}_\infty$ and hence the stabilizing strategy, derived using the kernel solution $\bm{\mathcal{S}}^i$, is $\pi_i \in \mathcal{L}_\infty$. The HJB equation $	H^i(\bm{\mathcal{F}}^i_t,\grad{{V}_t^i}, \mu_t^{\pi_{i}})=0$ signifies that ${\dot V}_t^i=-{U}_t^{i} \, \left(\bm{\mathcal{F}}^{i}_t,\mu_t^{\pi_{i}}\right)\le0$. Therefore, ${\dot V}_t^i \in \mathcal{L}_\infty$ and $\dot {\bm{Z}}^{\pi_{i}}_t  \in \mathcal{L}_\infty$. This HJB equation yields  $\lim\limits_{t \rightarrow \infty}\lVert{{V}^{i}_t}\rVert\rightarrow \bf 0$ with $\int_0^t\frac{1}{2}\left(\bm{\mathcal{F}}^{i\, T}_\tau\,{\bf Q}^i\,\bm{\mathcal{F}}^{i}_\tau+\mu_\tau^{\pi_{*i \, T}}\,{\bf R}^i\,\mu_\tau^{\pi_{*i}}\right)\, d\tau \le {V}_0^{i}({\bm{Z}}^{\pi_{i}}_0)$. Then, $\int_0^t\frac{1}{2}\bm{\mathcal{F}}^{i\, T}_\tau\,\left({\bf Q}^i+{\pi^T_{*i}}\,{\bf R}^i\,{\pi_{*i}}\right)\,\bm{\mathcal{F}}^{i}_\tau\, d\tau \le {V}_0^{i}({\bm{Z}}^{\pi_{i}}_0)$. This reveals that $\bm{\mathcal{F}}_t^{i} \in \mathcal{L}_2$ and ${\dot V}_t^i \in \mathcal{L}_2$. Therefore, according to Barbalat's Lemma $\lim\limits_{t \rightarrow \infty}\, {\dot V}_t^i \, \rightarrow \bf 0,$ which means that the model-following and observation errors are stabilized asymptotically. 
\end{proof}

\section{IRL Solution Algorithm\label{sec:IRL_Sol}}
The analytical solution of the coupled integral Bellman optimality equations~\eqref{eq:Bello} cannot be done in a straight forward manner. Hence, approximate learning mechanisms are needed to solve these equations such as RL. As such, a model-free IRL solution is developed to find the best strategies-to-follow. 
\subsection{Integral Reinforcement Learning Algorithm}
Algorithm~\ref{Alg:Alg1} lays out the architecture of an online IRL solution. It solves the integral temporal difference equations~\eqref{eq:Bello} following the optimal strategies~\eqref{optpol} to solve the underlying sub-control problems simultaneously. This is done in a model-free fashion using the error measurements $e^{Mf}_t$ and $e^{Ob}_t$ in addition to the observed states ${\bf \mathcal{\hat X}}$. 
	\begin{algorithm}[htb!]
	\caption{\label{Alg:Alg1} Integral Reinforcement Learning Algorithm}
	\begin{enumerate}
		\item[{\footnotesize{}1:}] Initialize the states $\bm{\mathcal{F}}^{i}_0, \forall i$ and strategies ${\pi_{i}}, \forall i$ and hence compute the control signals $\mu_0^{\pi_{i}}, \forall i$.
		\item[{\footnotesize{}2:}] Calculate updated matrices $\bm{\mathcal{S}}^{i(r+1)}, \forall i$ by solving the integral Bellman equations 
		\begin{equation}
			\small
	{V}_t^{i(r+1)}({\bm{Z}}^{\pi_{i(r)}}_t)-{V}_t^{i(r+1)}({\bm{Z}}^{\pi_{i(r)}}_{t+\delta}) =\int_t^{t+\delta} {U}_\tau^{i(r)}  \left(\bm{\mathcal{F}}^{i}_\tau,\mu_\tau^{\pi_{i(r)}}\right)\, d\tau,
			\label{eq:PIVal}
		\end{equation}
		where $r$ is an iterative index.
		\item[{\footnotesize{}3:}] According to the updated solution matrices $\bm{\mathcal{S}}^{i(r+1)}$, find  
		\begin{equation}
\mu_{t+\delta}^{\pi_i^00{(r+1)}}=- \bm{\mathcal{S}}^{i(r+1)^{-1}}_{{\bf \mu}^{\pi}{\bf \mu}^{\pi}}\,	\bm{\mathcal{S}}^{i(r+1)}_{{\bf \mu}^{\pi}\bm{\mathcal{F}}}\, \bm{\mathcal{F}}^{i}_{t+\delta}.
			\label{eq:PIPol}
		\end{equation}
		\item[{\footnotesize{}4:}] Upon the convergence of $\lVert\bm{\mathcal{S}}^{i(r+1)}-\bm{\mathcal{S}}^{i(r)}\rVert, \, \forall i$ terminate the adaptation process.
	\end{enumerate}
\end{algorithm}
This algorithm is implemented in a model-free fashion using the means of adaptive critic. The actor structure approximates the underlying optimal strategy while the critic observes the quality of the attempted strategy.  
\subsection{Actor-Critic Implementation}  
A real-time parameter estimation approach is needed to solve for the best strategies-to-follow and this is done using two steps. The first considers structures $\bm{\hat \mathcal{S}}^i,\forall i$ to approximate matrices $\bm{\mathcal{S}}^i, \forall i$ (i.e., solving the underlying integral Bellman optimality equations) and the second approximates the optimal strategies $\pi_{*i}, \forall i$ taking the form of $\hat \pi_{i}, \forall i$. 
The vector-indices of the process are defined as follows $n=3$ and $m=p=s=1$. Hence, each function ${V}_t^i({\bm{Z}}^{\pi_i}_t)$ is approximated using the following critic structure 

\begin{equation}
	\hat{V}_t^i({\bm{Z}}^{\hat \pi_i}_t) =\frac{1}{2}{\bm{Z}}^{\hat \pi_i\, T}_t\, 	\bm{\hat \mathcal{S}}^i \, {\bm{Z}}^{\hat \pi_i}_t, \forall i
	\label{crit}
\end{equation}
where ${\hat \pi_i}^T \in \mathbb{R}^{3}$ and ${\bm 0}<\bm{\hat \mathcal{S}}^{i\, T} \in \mathbb{R}^{4\times4}$ are the weights of the actor and critic structures, respectively. 

The different integral Bellman optimality equations can be written as $\tilde{V}_{t,t+\delta}^i({\bm{Z}}^{\hat \pi_i}_{t,t+\delta}) =\int_t^{t+\delta} {U}_\tau^{*i} \, \left(\bm{\mathcal{F}}^{i}_\tau,\mu_\tau^{\pi_{*i}}\right)\, d\tau, \, \forall i$ with $\tilde{V}_{t,t+\delta}^i({\bm{Z}}^{\hat \pi_i}_{t,t+\delta})=\hat{V}_t^i({\bm{Z}}^{\hat \pi_i}_t)-\hat{V}_{t+\delta}^i({\bm{Z}}^{\hat \pi_i}_{t+\delta}), \forall i$. These equations can be reshaped such that 
	\begin{equation}
	{\bf\Theta}^i \tilde {\bm{Z}}^{\hat \pi_i}_t={\bf\Phi}^i_t,
	\label{crit}
\end{equation}
where  $\tilde {\bm{Z}}^{\hat \pi_i}_t=\left\{\left({\bf Z}^{\zeta_i}_{t} \bigotimes {\bf Z}^{\eta_i}_{t}\right), \, i\in \left\{Ob,Cl, Mf \right\}, \right.$ $\zeta_{Ob}=1,\dots,(\ell \times p+m), $ $\zeta_{Cl}=1,\dots,(n \times p+m), $ $\zeta_{Ob}=1,\dots,(\nu \times s+m), $
$\eta_{Ob}=\zeta_{Ob},\dots,(\ell \times p+m),$
$\eta_{Cl}=\zeta_{Cl},\dots,(n \times p+m),$
$\left. \eta_{Ob}=\zeta_{Ob},\dots,(\nu \times s+m)\right\},$ 
${\bf\Theta}^i$ is a vector that is calculated by reshaping matrix $\frac{1}{2}\bm{\hat \mathcal{S}}^i$ to associate its entries with $\tilde {\bm{Z}}^{\hat \pi_i}_t,$ and ${\bf\Phi}^i_t=\int_t^{t+\delta} {U}_\tau^{*i} \, \left(\bm{\mathcal{F}}^{i}_\tau,\mu_\tau^{\pi_{*i}}\right)\, d\tau.$

Similarly, the best strategy-to-follow is represented by \eqref{eq:PIPol}. Hence, each strategy is approximated by an actor structure $\hat \pi_i$ such that
\begin{equation}
	{\bm\hat \pi}_i \, \bm{\mathcal{F}}^i_t={\bm\phi}^i_t,
	\label{act}
\end{equation}
where vector ${\bm \hat \pi}_i$ represents the actor weights and ${\bm\phi}^i_t=- \bm{\mathcal{\hat S}}^{i^{-1}}_{{\bf \mu}^{\hat \pi}{\bf \mu}^{\hat \pi}}\,	\bm{\mathcal{\hat S}}^{i}_{{\bf \mu}^{\hat \pi}\bm{\mathcal{F}}^i} \, \bm{\mathcal{F}}^i_t$. 

The structures~\eqref{crit}~and~\eqref{act} are motivated by Algorithm~\ref{Alg:Alg1} and these forms highlight the coupling between the weights of the critic and actor structures. The weights will be tuned using parameter estimation approaches such are those rely on projection. The following result explains the convergence characteristics of the projection adaptation approach.

\begin{thm}
\label{thm:act-crt}
Let the actor weights $\hat \pi_i$ and critic weights $\bm{\hat \mathcal{S}}^i$ be calculated using Algorithm~\ref{Alg:Alg1}. Then, 
\begin{enumerate}
	\item[a.] The actor and critic weights converge to a set of weights $\hat \pi^*_i$ and $\bm{\hat \mathcal{S}}^{*i}$ found by solving \eqref{eq:PIVal} and \eqref{eq:PIPol} simultaneously.
	\item[b.] The deviations in actor and critic weights from the optimal solution (i.e., $\hat \pi^*_i$ and ${\bf\Theta}^{i*}$) are bounded under mild conditions about the adaptation paces.
\end{enumerate}
\end{thm}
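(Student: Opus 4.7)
The plan is to handle (a) and (b) jointly through a Lyapunov analysis of the coupled actor--critic recursion obtained by projecting the regression equations \eqref{crit} and \eqref{act}: convergence in (a) is the noise-free limit of the ultimate-boundedness result in (b) under a persistence-of-excitation (PE) hypothesis on the regressor $\tilde{\bm Z}^{\hat\pi_i}_t$.

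First I would introduce the weight errors $\tilde{\bf\Theta}^i={\bf\Theta}^i-{\bf\Theta}^{*i}$ and $\tilde{\hat\pi}_i=\hat\pi_i-\hat\pi^*_i$, where the starred quantities are the fixed points of Algorithm~\ref{Alg:Alg1}. Existence of this fixed point and positive-definiteness of the limiting kernel $\bm{\hat\mathcal{S}}^{*i}$ (needed so that the $\mu^\pi\mu^\pi$ block inverted in \eqref{act} is well defined) follow from Lemmas~\ref{lem:Lyp} and \ref{lem:Bell}; uniqueness on the bounded trajectories guaranteed by Theorem~\ref{thm:Equi} requires the PE assumption on $\tilde{\bm Z}^{\hat\pi_i}_t$, which I would identify as one of the ``mild conditions'' alluded to in the statement.

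Next, with the normalized projection update rule for the critic
\begin{equation*}
{\bf\Theta}^i_{k+1}={\bf\Theta}^i_k+\alpha_c^i\,\frac{\tilde{\bm Z}^{\hat\pi_i}_{t_k}\bigl({\bf\Phi}^i_{t_k}-\tilde{\bm Z}^{\hat\pi_i\,T}_{t_k}{\bf\Theta}^i_k\bigr)}{1+\tilde{\bm Z}^{\hat\pi_i\,T}_{t_k}\tilde{\bm Z}^{\hat\pi_i}_{t_k}}
\end{equation*}
and an analogous update for $\hat\pi_i$ driven by the residual of \eqref{act}, subtracting the fixed-point identities yields
\begin{equation*}
\tilde{\bf\Theta}^i_{k+1}=\Bigl(I-\alpha_c^i\,\frac{\tilde{\bm Z}^{\hat\pi_i}_{t_k}\tilde{\bm Z}^{\hat\pi_i\,T}_{t_k}}{1+\|\tilde{\bm Z}^{\hat\pi_i}_{t_k}\|^2}\Bigr)\tilde{\bf\Theta}^i_k+\varepsilon^i_k,
\end{equation*}
where $\varepsilon^i_k$ collects the coupling driven by the current actor error. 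Taking the composite candidate $L_k=\sum_{i}\bigl(\|\tilde{\bf\Theta}^i_k\|^2+\|\tilde{\hat\pi}_{i,k}\|^2\bigr)$, the normalization in the denominator dominates the quadratic dissipation, and restricting the paces to $0<\alpha_c^i,\alpha_a^i<2$ yields $\Delta L_k\le -\gamma\,L_k+\beta\,\|\varepsilon_k\|^2$. Telescoping together with PE gives Part~(a) in the exact case ($\varepsilon^i_k\equiv 0$ at the fixed point), while the uniform boundedness of the cross terms through Theorem~\ref{thm:Equi} and Assumption~\ref{Asm:2} yields Part~(b) as ultimate boundedness of both $\tilde{\bf\Theta}^i_k$ and $\tilde{\hat\pi}_{i,k}$.

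The main obstacle will be disentangling the actor--critic coupling: the regressor $\tilde{\bm Z}^{\hat\pi_i}_t$ itself depends on $\hat\pi_i$, and the actor update in \eqref{act} inverts a block of the current critic estimate $\bm{\hat\mathcal{S}}^i$. Keeping this inverse well defined uniformly in $k$ -- so that $\varepsilon^i_k$ can be dominated by $\mathcal{O}(\|\tilde{\hat\pi}_i\|+\|\tilde{\bf\Theta}^i\|)$ with a constant independent of $k$ -- is the delicate step; a standard remedy, which I would invoke as the remaining ``mild condition,'' is a two-timescale schedule $\alpha_a^i\ll\alpha_c^i$ that lets the critic's pivot block settle positive definite before the actor perturbs it.
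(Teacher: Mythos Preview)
Your proposal is a reasonable route, but it differs in several respects from what the paper actually does. First, the paper does not take the normalized projection recursions as given: it \emph{derives} them by posing the weight update itself as a constrained optimization (minimize $\|{\bf\Theta}^{i(r+1)}-{\bf\Theta}^{i(r)}\|^2$ subject to the regression constraint $f^i_{\bf\Theta}=0$, and likewise for the actor), writes down Hamiltonians $H^i_{\bf\Theta}$ and $H^i_{\hat\pi}$, applies the first-order conditions, and only then regularizes the resulting updates with the parameters $\sigma^i_c,\alpha^i_c,\sigma^i_a,\alpha^i_a$. Second, for Part~(a) the paper does not invoke persistence of excitation at all: it argues directly from Lemma~\ref{lem:Bell} and Theorem~\ref{thm:Equi} that the regressors $\bm{\mathcal F}^i_t$ and $\tilde{\bm Z}^{\hat\pi_i}_t$ tend to zero, so the increments in the update laws vanish and the weights settle to some limit $\hat\pi^*_i,\,\bm{\hat{\mathcal S}}^{*i}$. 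Third, for Part~(b) the paper treats critic and actor \emph{separately}, obtaining the decoupled error recursions ${\bf\Theta}^{i(r+1)\,T}_e={\bm A}^i_c\,{\bf\Theta}^{i(r)\,T}_e$ and $\hat\pi^{r+1}_{i\,e}={\bm A}^i_a\,\hat\pi^{r}_{i\,e}$, and reads off boundedness from the step-size bounds $0<\sigma^i_c,\sigma^i_a<2$ and $\alpha^i_c,\alpha^i_a>0$; there is no coupling term $\varepsilon^i_k$, no composite Lyapunov function, no PE hypothesis, and no two-timescale schedule. Your joint Lyapunov treatment with explicit actor--critic coupling is more careful and, under PE, would identify the limit with \emph{the} fixed point rather than merely \emph{a} limit; the paper's argument is shorter and more elementary but leaves the coupling and well-posedness issues you flag essentially unaddressed, and its ``mild conditions'' are nothing more than the step-size bounds above.
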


\begin{proof}
\noindent 	
\textit{a.} The tuning errors in the adapted critic and actor weights are optimized using the Hamiltonian functions $H^i_{\bf\Theta}$ and $H^i_{\hat \pi}$, respectively. This is done along the trajectories of~\eqref{crit} and \eqref{act} as follows 
\begin{eqnarray*}
 H^i_{\bf\Theta}({\bf\Theta}^i,\lambda^i_{\bf\Theta}, f^i_{\bf\Theta})&=&\frac{1}{2}\left({\bf\Theta}^{i(r+1)}-{\bf\Theta}^{i(r)}\right)\left({\bf\Theta}^{i(r+1)}-{\bf\Theta}^{i(r)}\right)^\textrm{T}\\&+&\lambda^i_{\bf\Theta} f^i_{\bf\Theta} 	\\
  H^i_{\hat \pi}(  {\bm \hat \pi_i},  \lambda^i_{\bm \hat \pi}, f^i_{\bm \hat \pi})&=&\frac{1}{2}\left({\bm \hat \pi}^{r+1}_{i}-{\bm \hat \pi}^{r}_{i}\right)\left({\bm \hat \pi}^{r+1}_{i}-{\bm \hat \pi}^{r}_{i}\right)^\textrm{T}+\lambda^i_{\bm \hat \pi}f^i_{\bm \hat \pi} ,
 \end{eqnarray*}
where $\lambda^i_{\bf\Theta}$ and $\lambda^i_{\bm \hat \pi}$ are Lagrange multipliers associated with the optimization constraints ${f}^i_{\bf\Theta} ={\bf\Theta}^{i(r+1)} \tilde {\bm{Z}}^{\bm \hat \pi_i}_{r}-{\bf\Phi}^i_{r}$ and ${f}^i_{\bm \hat \pi}={\bm \hat \pi}_{i}^{r+1} \, {\bf \mathcal{F}}^i_r-{\bm \phi}^i_r$, respectively. The solution is achieved in real-time so that time-index $t$ is related to index $r$. This means that, the weights evaluated at index $(r+2)$ refer to calculations done at $t+2\delta$, for example.

Algorithm~\ref{Alg:Alg1} solves for the critic weights which are used to update the weights of the actor in that specific order. To find the critic and actor adaptation laws, the Hamiltonian optimization conditions are applied such that $\displaystyle \frac{\partial  H^i_{\bf\Theta}}{\partial {\bf\Theta}^{i(r+1)}}=0,\frac{\partial  H^i_{\bf\Theta}}{\lambda^i_{\bf\Theta}}=0, \frac{\partial  H^i_{\hat \pi}}{{\bm \hat \pi}^{r+1}_{i}}=0, \, \text{and} \, \frac{\partial  H^i_{\hat \pi}}{\lambda^i_{\bm \hat \pi}}=0$. Then, the following equations hold
\begin{eqnarray}
&\left({\bf\Theta}^{i(r+1)}-{\bf\Theta}^{i(r)}\right)^\textrm{T}+\lambda^i_{\bf\Theta}{\bm{Z}}^{\hat \pi_i}_t=0,	&{f}^i_{\bf\Theta}=0
\label{eq:ucrt}\\
	& \left({\bm \hat \pi}^{r+1}_{i}-{\bm \hat \pi}^{r}_{i}\right)^\textrm{T}+\lambda^i_{\bm \hat \pi} \bm{\mathcal{F}}^i_t =0,
	&f^i_{\bm \hat \pi}=0
	\label{eq:uact}
\end{eqnarray}
Manipulating~\eqref{eq:ucrt} and ~\eqref{eq:uact} yields the critic and actor update laws such that
\begin{eqnarray*}
{\bf\Theta}^{i(r+1)}&=&{\bf\Theta}^{i(r)}	-\frac{\tilde {\bm{Z}}^{\hat \pi_i \, \textrm{T}}_r}{\tilde {\bm{Z}}^{\hat \pi_i \, \textrm{T}}_r \, \tilde {\bm{Z}}^{\hat \pi_i}_r}\left({\bf\Theta}^{i(r)} \tilde {\bm{Z}}^{\bm \hat \pi_i}_{r}-{\bf\Phi}^i_{r}\right)
\\
 {\bm \hat \pi}^{r+1}_{i}&=&{\bm \hat \pi}^{r}_{i}-  \frac{\bm{\mathcal{F}}^{i \, \textrm{T}}_t}{\bm{\mathcal{F}}^{i\, \textrm{T}}_t \, \bm{\mathcal{F}}^i_t} \left({\bm \hat \pi}_{i}^{r} \, {\bf \mathcal{F}}^i_r-{\bm \phi}^i_r\right).
 \end{eqnarray*}
These actor-critic adaptation forms can be modified without violating the overall optimization objectives. This is done by controlling the adaptation paces and ensuring non-divergent behavior due to possible singularity issues such that
\begin{eqnarray}
	{\bf\Theta}^{i(r+1)}&=&{\bf\Theta}^{i(r)}	-\frac{\sigma^i_c \, \tilde {\bm{Z}}^{\hat \pi_i \, \textrm{T}}_r}{\alpha^i_c+\tilde {\bm{Z}}^{\hat \pi_i \, \textrm{T}}_r \, \tilde {\bm{Z}}^{\hat \pi_i}_r}\left({\bf\Theta}^{i(r)} \tilde {\bm{Z}}^{\bm \hat \pi_i}_{r}-{\bf\Phi}^i_{r}\right)
	\label{eq:crte}
	\\
	{\bm \hat \pi}^{r+1}_{i}&=&{\bm \hat \pi}^{r}_{i}-  \frac{\sigma^i_c \, \bm{\mathcal{F}}^{i \, \textrm{T}}_t}{\alpha^i_a+\bm{\mathcal{F}}^{i\, \textrm{T}}_t \, \bm{\mathcal{F}}^i_t} \left({\bm \hat \pi}_{i}^{r} \, {\bf \mathcal{F}}^i_r-{\bm \phi}^i_r\right),
	\label{eq:acte}
\end{eqnarray}
where $\sigma^i_c,\alpha^i_c,\sigma^i_a,$ and $\alpha^i_a, \, \forall i \in \mathbb{R}$ are positive parameters. The limits on their values will be explained later on.

Algorithm~\ref{Alg:Alg1} and the stability results highlighted by Lemma~\ref{lem:Bell} and Theorem~\ref{thm:Equi} reveal that
$\lim\limits_{t \rightarrow \infty}\lVert{{\bf \mathcal{F}}^i_t}\rVert\rightarrow \bf 0$
and
$\lim\limits_{t \rightarrow \infty}\lVert{\tilde {\bm{Z}}^{\bm \hat \pi_i}_{t}}\rVert\rightarrow \bf 0$. Therefore, the weights $\bm{\hat \mathcal{S}}^{i(r)}$ represented by ${\bf\Theta}^{i(r)}$ and ${\bm \hat \pi}^{r}_{i}$ will converge to a solution comprising a set of weights $\hat \pi^*_i$ and ${\bf\Theta}^{i*}$ or equivalently $\bm{\hat \mathcal{S}}^{*i}$, respectively.
\newline 

b. Let the adaptation errors in the updated critic and actor weights be given by ${\bf\Theta}_e^{i(r)}={\bf\Theta}^{i*}-{\bf\Theta}^{i(r)}$ and ${\bm \hat \pi}^{r}_{i \, e}={\bm \hat \pi}^{r}_{i}-{\bm \hat \pi}^{*}_{i},$ respectively. Then,~\eqref{eq:crte} yields $\displaystyle  {\bf\Theta}^{i(r+1) \, \textrm{T}}_e={\bf\Theta}^{i(r) \, \textrm{T}}_e-\frac{\sigma^i_c \,\tilde {\bm{Z}}^{\hat \pi_i}_r\, \left({\bf\Phi}^i_{r}-\left(-{\bf\Theta}^{i(r)}_e+{\bf\Theta}^{i*}\right) \tilde {\bm{Z}}^{\bm \hat \pi_i}_{r}\right)^{\textrm{T}}}{\alpha^i_c+\tilde {\bm{Z}}^{\hat \pi_i \, \textrm{T}}_r \, \tilde {\bm{Z}}^{\hat \pi_i}_r}$ with ${\bf\Phi}^i_{r}-{\bf\Theta}^{i*} \tilde {\bm{Z}}^{\bm \hat \pi_i}_{r}=0$. Then, $\displaystyle {\bf\Theta}^{i(r+1) \, \textrm{T}}_e={\bf\Theta}^{i(r) \, \textrm{T}}_e-\frac{\sigma^i_c \,\tilde{\bm{Z}}^{\hat \pi_i}_r\, \tilde{\bm{Z}}^{\hat \pi_i\, \textrm{T}}_r\,}{\alpha^i_c+\tilde {\bm{Z}}^{\hat \pi_i \, \textrm{T}}_r \, \tilde {\bm{Z}}^{\hat \pi_i}_r}{\bf\Theta}^{i(r) \, \textrm{T}}_e$ or simply $\displaystyle {\bf\Theta}^{i(r+1) \, \textrm{T}}_e= {\bm A}^i_c \, {\bf\Theta}^{i(r) \, \textrm{T}}_e,$ where ${\bm A}^i_c=\left(\textrm{I}_c^i-\frac{\sigma^i_c \,\tilde{\bm{Z}}^{\hat \pi_i}_r\, \tilde{\bm{Z}}^{\hat \pi_i\, \textrm{T}}_r\,}{\alpha^i_c+\tilde {\bm{Z}}^{\hat \pi_i \, \textrm{T}}_r \, \tilde {\bm{Z}}^{\hat \pi_i}_r}\right)$ and $\textrm{I}_c^i$ is an identity matrix. In order to ensure bounded tuning of the critic weights, the parameters $\sigma^i_c$ and $\alpha^i_c$ must be chosen such that $0<\sigma^i_c<2 \quad \text{and} \quad 0 < \alpha^i_c$. Similarly,~\eqref{eq:acte} leads to  ${\bm \hat \pi}^{r+1}_{i\, e}={\bm A}^i_a{\bm \hat \pi}^{r}_{i\, e}$ with $\displaystyle {\bm A}^i_a=\left(\textrm{I}_a^i-\frac{\sigma^i_a \,{\bf \mathcal{F}}^i_r{\bf \mathcal{F}}^{i \, \textrm{T}}_r}{\alpha^i_a+{\bf \mathcal{F}}^{i \, \, \textrm{T}}_r{\bf \mathcal{F}}^i_r}\right),$ where $\textrm{I}_a^i$ is an identity matrix. Therefore, the conditions  $0<\sigma^i_a<2 \quad \text{and} \quad 0 < \alpha^i_a$ are made to ensure proper convergence to solution $\hat \pi^*_i$.
\begin{rem}
The IRL solution is developed for Linear-Time-Invariant (LTI) systems. This setup can be adopted for nonlinear and time-varying systems due to its flexible data-driven and model-free structure, although no rigorous proofs were attempted yet beyond the current work. Herein, LTI system is considered to show the proof of concept. 
\end{rem}
\end{proof}
\section{Model-Following Validation Results\label{sec:simulation}}
The model-free IRL solution is validated using a third-order dynamical process. The parameters of the process are given by
$
\footnotesize{\bf A}=\left[
\begin{array}{ccc}
	0  &   1 &    0\\
	0  &  -5 &   10\\
	0  &  -1 &   -5
\end{array}
\right],$
$
{\bf B}=\left[
\begin{array}{c}
	0  \\ 0  \\	1  
\end{array}
\right],
$
and
$
{\bf C}=\left[
\begin{array}{ccc}
	0  & 1  & 0  
\end{array}
\right]^T
$. The dynamic parameters of the desired process are given by $
\footnotesize \hat {\bf A}=\left[
\begin{array}{ccc}
	0.0132 &   1.0085 &  -0.0055\\
	0.0132 &  -5.0286 &   9.9132\\
   -0.0526 &  -1.0155 &  -4.9374
\end{array}
\right]$ and $
\hat {\bf B}=\left[
\begin{array}{ccc}
   -0.0072 & -0.0547 &	1.0527
\end{array}
\right]^T$. The selected output of the process is given by ${\bf Y}^s=\mathcal{X}(2)$. Further, the nonlinear reference trajectory ${Y}^{ref}_{t}$ is given by 
\begin{align*}
{Y}^{ref}_{t}
& \small 	=
\begin{cases}
\displaystyle
1+\exp^{(-0.01 \, t)} \, \cos(\frac{1.5\, t}{20}), \hspace*{1em} \text{for }  t \le 10\,\\
\displaystyle
0.5\,\left(1+\exp(-0.01\,(t-10))\right),  \hspace*{1em} \text{for } 10  < t \le 20 \, 
\end{cases}
\end{align*}
The model-following goal is to regulate the errors ${Y}^{ref}_{t}-\mathcal{X}(2)$ such that $\lim\limits_{t \rightarrow \infty}\lVert{{Y}^{ref}_{t}-\mathcal{X}(2)}\rVert\rightarrow  0$. The remaining learning parameters are presented in Table~\ref{tab:Tab1}. The simulation is done using MATLAB software for $20 \, \sec$.
\begin{table}[htbp]
	\caption{Learning and Adaptation Parameters}
	\label{tab:Tab1}
	\centering
		\begin{tabular}{*{4}{p{1.5cm}}}
			\toprule
			Parameter   &  Value & Parameter   &  Value \\ 
			\midrule
			${\bf Q}^i$ & $0.05\,I_{3}$   & ${\bf R}^i$   & $0.01$ \\ $\delta$ &	$0.01 \, \sec$	& 
			$\alpha^i_c$    & $1.8$ \\  $\sigma^i_c$   & $0.5$&
			$\alpha^i_a$  & $1.8$  \\ $\sigma^i_a$   & $0.5$ &
			\\
			\bottomrule	
		\end{tabular}
	\end{table}

\paragraph*{Discussion} The desired dynamic response is having the form of an unstable process and the reference-trajectory is set to be nonlinear to challenge the performance of the IRL solution. 
Figures~\ref{fig:fig1} and \ref{fig:fig2} highlight the simulation results. The simulation plots related to the closed-loop, observer, and model-following loops adopt solid, dotted, and dotted-dashed lines, respectively for clarity. The critic and actor weights are computed using~\eqref{eq:crte}~and~\eqref{eq:acte}, respectively. The adaptations of the actor and critic weights are depicted by Figs.~\ref{fig:act} and \ref{fig:crt}, respectively. It is shown that after some exploration phase, the weights of the actor and critic structures converge to a solution for the underlying sub-control problem. The control signals $\mu_t^{\pi_{Ob}}$ and $\mu_t^{\pi_{Mf}}$ reveal the ability of the IRL solution to regulate the observation and model-following errors as implied by Figs.~\ref{fig:consig} and \ref{fig:err}, respectively. The closed-loop control signal $\mu_t^{\pi_{Cl}}$ allows the process to follow the desired reference-trajectory and optimize the closed-loop performance, as revealed by Fig~\ref{fig:consig}. The closed-loop strategy converges to ${\bf \pi}_{Cl}=\left[
\begin{array}{ccc}
  -15.9517 &  -4.0410 &  -4.9822
\end{array}
\right]$. Further, the eigenvalues of the exact and desired open-loop processes are given by $\left(0, \, -5 + 3.1623i, \, -5- 3.1623i\right)$ and $\left(0, \, -4.9764 + 3.1599i, \, -4.9764 - 3.1599i\right)$, respectively. While the eigenvalues of the closed-loop processes are given by $\left(-2.2139, \, -6.3842 + 5.5943i, \, -6.3842 - 5.5943i\right)$ and $\left(-2.1749, \, -6.3430 + 5.6979i,\, -6.3430 - 5.6979i\right),$ respectively. Although process~\eqref{eq:dyna} is unstable and the reference-trajectory is nonlinear, the observer and closed-loop strategies are able to stabilize the dynamic process while following the desired reference-trajectory. Overall, the control signal forms emphasized that the optimization goals are met. The performance of the process states and observed ones are shown in Fig.~\ref{fig:stat}. This plot highlights the ability of the IRL solution to achieve the control goals simultaneously in a model-free manner.

\section{Conclusion\label{sec:conclus}}
This work introduced a model-free integral reinforcement learning solution for a model-following control problem that is inspired by an observer scheme and a projection technique. The overall control structure included three interactive strategies. The solution was able to regulate the errors between the actual and observed states using a model-free strategy. This is done interactively with another model-free strategy that optimizes and stabilizes the closed-loop systems. Finally, a third strategy was considered to allow the process to follow a nonlinear reference-trajectory. The presented solution optimized the overall dynamic performance of the process while regulating the model-following errors. The implementation was done using an approximate projection approach to adapt the actor-critic weights comprising the different sub-strategies. The learning parameters were subject to mild conditions about the adaptation paces of the tuning laws to ensure convergence to a solution.
\newpage
	\begin{figure}[!hbt]
	\centering
	\subcaptionbox{Actor weights%
		\label{fig:act}}
	{%
		\includegraphics[width=0.45\textwidth]{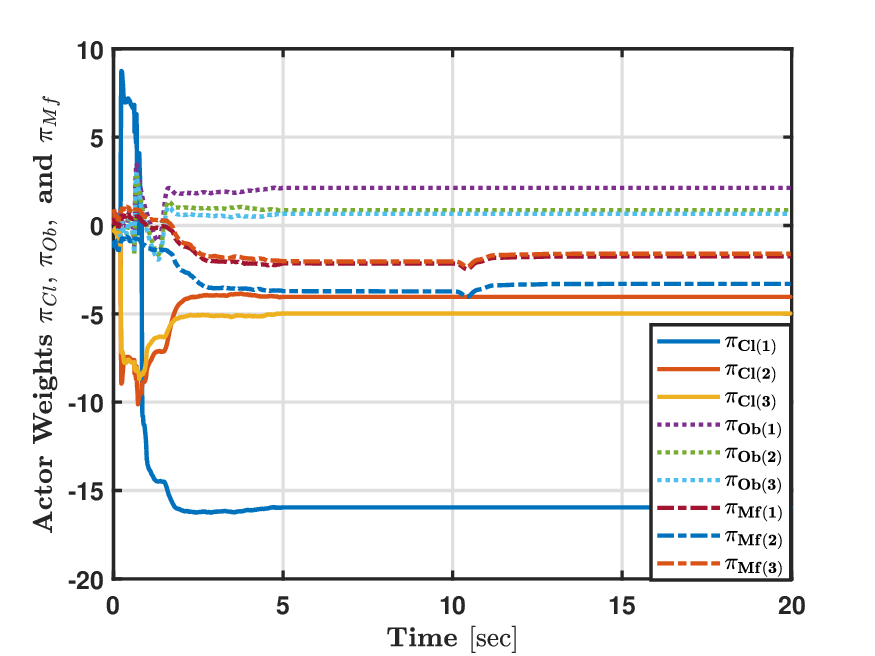}%
	}
	\\[1ex] 
	\subcaptionbox{Critic weights%
		\label{fig:crt}}
	{%
		\includegraphics[width=0.45\textwidth]{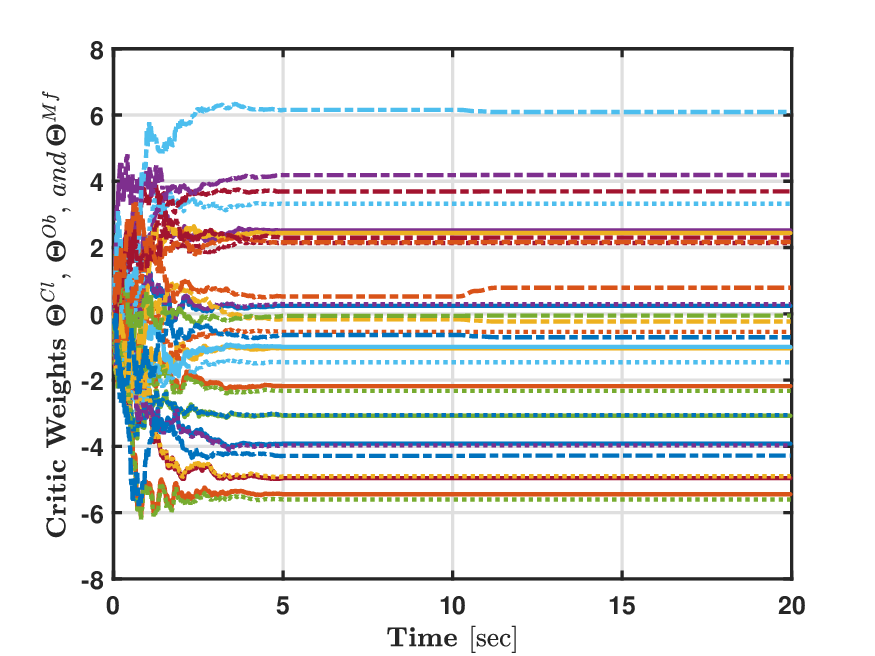}%
	}
	\caption{Adaptation of actor-critic weights  \label{fig:fig1}} 
\end{figure}

	\begin{figure}[!hbt]
	\centering
	\subcaptionbox{Control signals%
		\label{fig:consig}}
	{%
		\includegraphics[width=0.45\textwidth]{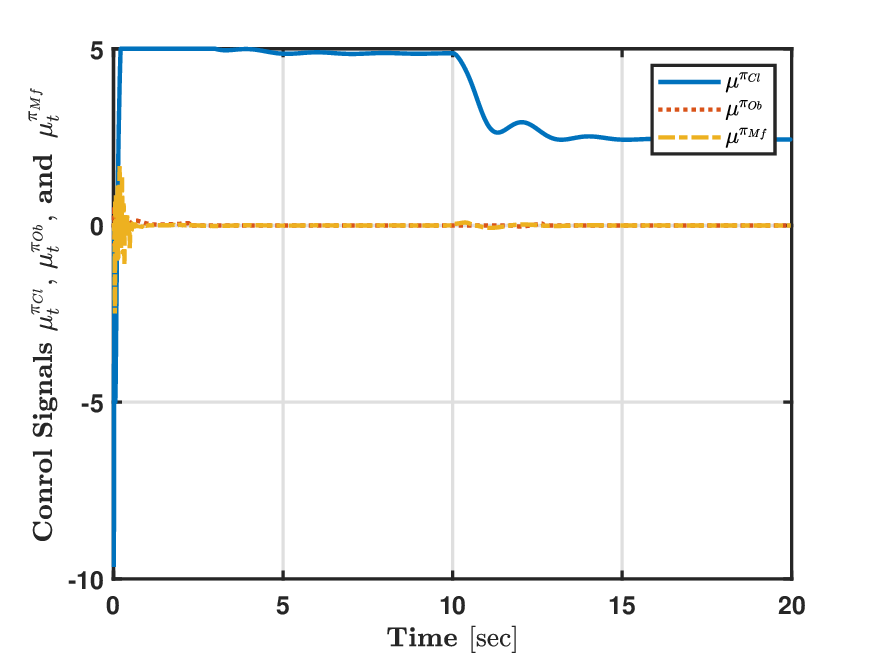}%
	}
	\\[1ex]
	\subcaptionbox{Model-following perfromance %
		\label{fig:stat}}
	{%
		\includegraphics[width=0.45\textwidth]{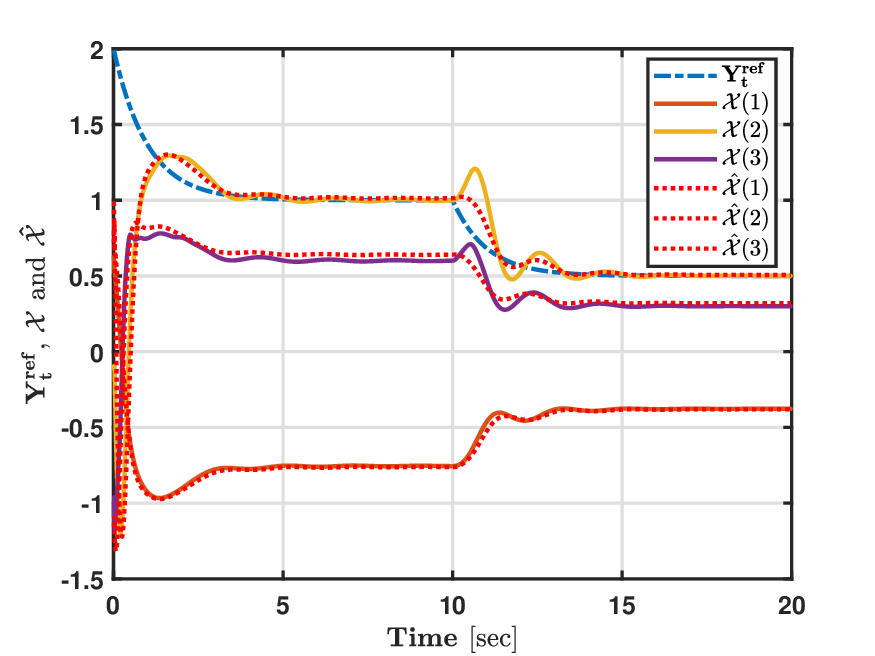}%
	}
	\\[1ex]
	\subcaptionbox{Observation and model-following errors%
		\label{fig:err}}
	{%
		\includegraphics[width=0.45\textwidth]{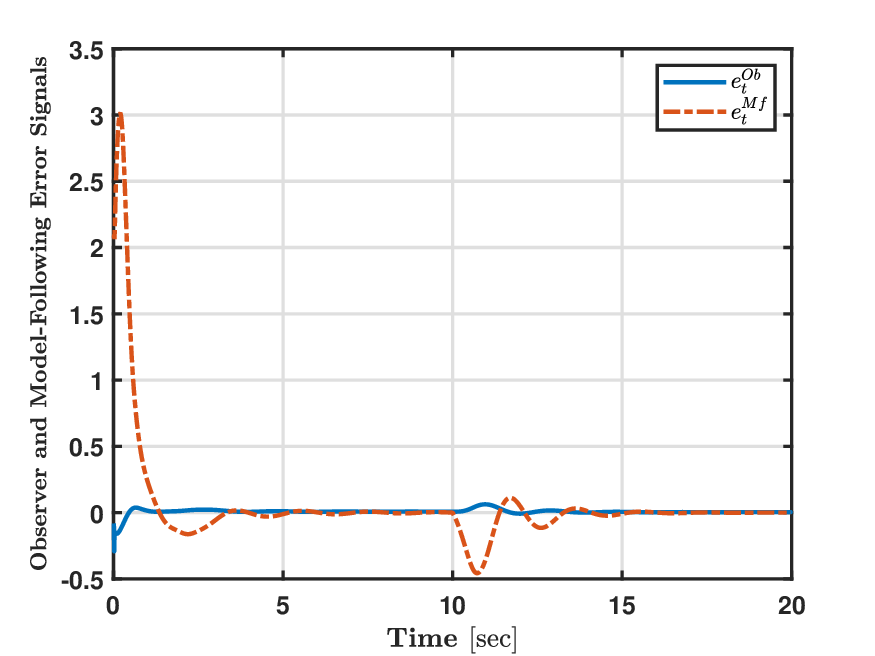}%
	}
	\caption{Performance of the overall control scheme \label{fig:fig2}} 
\end{figure}

\newpage

\balance

	\bibliographystyle{IEEEtran}
	\bibliography{bib_Neuro_Attit}
\end{document}